\newcommand{\cS}{{\cal S}}
\newcommand{\cF}{{\cal F}}
\newcommand{\RN}{Radio Network\xspace}
\newcommand{\diss}{Layer Dissemination\xspace}
\newcommand{\mig}[1]{\textcolor{blue}{#1}}
\renewcommand{\mig}[1]{#1}
\title{Ad-hoc Affectance-selective Families for\\ \diss}
\author{
Dariusz R. Kowalski~\thanks{Univ. of Liverpool, Computer Science Dept., Liverpool, UK.~\texttt{D.Kowalski@liverpool.ac.uk}}
\and
Harshita Kudaravalli~\thanks{Pace University, Computer Science Dept., New York, NY, USA.~\texttt{hk21040n@pace.edu}}
\and 
Miguel A. Mosteiro~\thanks{Pace University, Computer Science Dept., New York, NY, USA.~\texttt{mmosteiro@pace.edu}}
}
\date{}
\begin{document}

\maketitle

\begin{abstract}

Information dissemination protocols for ad-hoc wireless networks frequently use a minimal subset of the available 
communication links, defining a rooted ``broadcast'' tree. 
In this work, we focus on the core challenge of disseminating from one layer to the next one of such tree. 
We call this problem \emph{\diss}.
We study \diss under a generalized model of interference, called \emph{affectance}. 
The affectance model subsumes previous models, such as \RN and Signal to Inteference-plus-Noise Ratio. 
We present randomized and deterministic protocols for \diss. 
These protocols are based on a combinatorial object that we call \emph{Affectance-selective Families}.
\mig{
Our approach combines an engineering solution with theoretical guarantees. That is, we provide a method to characterize the network with a global measure of affectance based on measurements of interference in the specific deployment area. Then, our protocols distributedly produce an ad-hoc transmissions schedule for dissemination. In the randomized protocol only the network characterization is needed, whereas the deterministic protocol requires full knowledge of affectance. Our theoretical analysis provides guarantees on schedule length.
We also present simulations of a real network-deployment area contrasting the performance of our randomized protocol, which takes into account affectance, against previous work for interference models that ignore some physical constraints.
The striking improvement in performance shown by our simulations show the importance of utilizing a more physically-accurate model of interference that takes into account other effects beyond distance to transmitters.
}
 
\end{abstract}


\section{Introduction}
The problem of disseminating information in ad-hoc wireless communication networks (for instance, embedded in the Internet of Things) has been studied in theory and practice. 
%
To reduce traffic, dissemination protocols often use a minimal subset of the available communication links, call it $T$. Given that nodes communicate through radio broadcast, nodes may still receive through other links, but to provide performance guarantees only $T$ is assumed to be available, albeit taking into account the interference of the rest of the links.

When the dissemination task involves delivery to all nodes, $T$ defines a tree topology. (Since all nodes must be reachable but the set is minimal.) 
Either because there is a single source node (e.g.~\cite{KMR_fomc14,KowalskiMZarxiv15}), 
or because packets are first aggregated at a single node for later dissemination (e.g.~\cite{KhabbazianK11,manne2006optimal}), 
the problem reduces to disseminate from a root to all other nodes through a \emph{Broadcast Tree}. Moreover, as we observed in~\cite{KMR_fomc14,KowalskiMZarxiv15}, when packets are disseminated from layer to layer in a BFS fashion, the bottleneck for fast dissemination on broadcast trees occurs at layers with high interference. Indeed, we have shown in~\cite{KMR_fomc14,KowalskiMZarxiv15} that in the long run 
throughput is a function of maximum layer interference, and it is independent of interference in paths where packets can be easily pipelined. 
This phenomenon has also been observed in other works, such as in the following fragment in~\cite{GHK:mmbComplexity}.


\begin{quote}
\emph{In fact, if one has a fast way of transmitting one or more messages from one layer to the next, repeating this and using pipelining would yield a fast broadcast algorithm. Thus, the crux of the broadcast problem lies in how fast can this task be solved in bipartite graphs.}
\end{quote}

Thus, in this work, we focus on the core challenge of dissemination through one layer of a broadcast tree. We model such layer as a bipartite graph $G=(V,W,E)$ where $V$ (the \emph{transmitters}) and $W$ (the \emph{receivers}) are sets of nodes and $E$ is the set of links from $V$ to $W$. We study the \diss problem in $G$ assuming that initially all the transmitters have an identical piece of information, called \emph{message} or \emph{packet} indistinctively. To complete the task, all the nodes in $W$ have to receive the message. 

We do not assume any underlying communication infrastructure. That is, transmitters attempt to deliver the message by radio broadcast but, if two or more nodes transmit at the same time, mutual interference may prevent reception of the message. 
To take into account this phenomenon, we study \diss under a general model of interference called \emph{affectance}. As in~\cite{KMR_fomc14,KowalskiMZarxiv15} we parameterize affectance with a real value $0\leq a(u,(v,w))\leq 1$ that represents the affectance of each transmitter $u$ on each link $(v,w)$. 
An affectance model of interference from links on links was introduced by Kesselheim~\cite{Kaff} in the context of link scheduling.
\mig{
Affectance is a general model of interference in the sense that comprises other particular models studied before (cf.~\cite{KowalskiMZarxiv15}).
Moreover, previous models do not accurately represent the physical constraints in real-world deployments. 
For instance, 
in the \RN model~\cite{ChlamtacK87} interference from non-neighboring nodes is neglected, 
and Signal to Inteference-plus-Noise Ratio (SINR)~\cite{HWaff,SRSSINRdomSet} is a simplified model because other constraints, such as obstacles, are not taken into account.
}

\diss is closely related to the combinatorial problem of computing selective families. 
The notion of selective families was introduced in~\cite{CGGPRselFam} as a generalization of the dissemination problem in the \RN model to a combinatorial problem.
Later in~\cite{ClementiCMPSapprox01}, Clementi et al. showed how to compute selective families ad-hoc, that is, for a given input family. The results are applicable to dissemination under the \RN model of interference when the topology is known.

In this work, we follow-up on~\cite{CGGPRselFam} and~\cite{ClementiCMPSapprox01} introducing the concept of \emph{Affectance-selective Families}. That is, we generalize the dissemination problem in bipartite graphs also to a combinatorial problem, but taking into account the specific conditions to achieve a successful transmission under our generalized model of interference. Under certain conditions, we show the existence of families of subsets of $[n]$ that are affectance-selective for a given family of subsets of $[n]$. We also present randomized and deterministic distributed protocols for \diss based on those affectance-selective families, and we provide running time theoretical guarantees. 

\mig{
Our approach combines an engineering solution with theoretical guarantees. That is, we provide a method to characterize the network with a global measure of affectance based on measurements of interference in the specific deployment area. Then, our protocols distributedly produce an ad-hoc transmissions schedule for dissemination. The randomized protocol only requires knowledge of the network characterization (which could be hardwired), whereas the deterministic protocol requires full knowledge of the affectance values and it is computationally intensive. 
Similar approaches have been explored in practice, e.g. Conflict Maps (CMAP)~\cite{cmap}, where nodes probe the network to build a map of conflicting transmissions.
}


\mig{
In order to show the impact of a more accurate model of interference, we run simulations for a real-world deployment area. We compare the performance of our randomized protocol with previous protocols designed for the \RN and SINR models. Our experimental results expose a striking improvement in running time. Notably, this improvement does not come from algorithmic novelty, since all three protocols rely on transmitting with some probability, but from the careful choice of this probability as a function of the interference measured experimentally. 
}

\paragraph{Roadmap}
In Section~\ref{sec:relwork} we overview previous related work. In Section~\ref{sec:model} we specify the details of our models and the relation between affectance-selective families and dissemination in bipartite graphs. In Section~\ref{sec:results} we specify the results obtained highlighting the novelty of our contribution. Section~\ref{sec:analysis} contains our analysis and the protocols presented, and in Section~\ref{section:sim} we present our simulation results.


\section{Related Work}
\label{sec:relwork}

Before our work in~\cite{KMR_fomc14,KowalskiMZarxiv15}, the {\em generalized affectance} model was introduced and used only in the context of one-hop
communication, more specifically, to link scheduling by Kesselheim~\cite{Kaff,KVaff}. 
He also showed how to use it for dynamic link scheduling in batches.
This model was inspired by the affectance parameter introduced in the more restricted SINR setting~\cite{HWaff}.
They give a characteristic of a set of links, based on affectance, that influence
the time of successful scheduling these links under the SINR model.
In~\cite{KMR_fomc14,KowalskiMZarxiv15}, we generalized this characteristic, called the maximum average tree-layer affectance,
to be applicable to multi-hop communication tasks
such as broadcast, together with another characteristic, called the maximum path affectance.

\diss is closely related to the combinatorial problem of computing selective families ad-hoc for a given family of sets. 
The notion of selective families was introduced in~\cite{CGGPRselFam} and it is defined as follows.
Given any set of items $U$, a family $\cF$ of subsets of $U$ is called $k$-selective for the set $U$ if and only if for any $X\subseteq U$, such that $|X|\leq k$, there is a set $Y\in \cF$ satisfying $|X\cap Y|=1$.
Here, we introduce the concept of affectance-selective families, taking into account the specific conditions to achieve a successful transmission under affectance. 

With respect to selective families, our work can be seen as an extension of~\cite{ClementiCMPSapprox01} to affectance. Indeed, in~\cite{ClementiCMPSapprox01}, Clementi et al. showed how to compute selective families ad-hoc for a given family. That is, their algorithm can be used for dissemination under the \RN model. For instance, the input families can be seen as the different subsets of nodes that may be active at a given time, or as in \diss as the subsets of transmitters connected to each receiver. 
Here, we revisit this problem under affectance, that is, we show the existence of affectance-selective families (the precise notion is defined in Section~\ref{sec:model}), we present randomized and deterministic protocols to solve \diss based on the affectance-selective families, and we analyze their performance.



\section{Model and Problem}
\label{sec:model}
We model the network topology as a bipartite graph $G=(V,W,E)$, 
where $V$ is the set of transmitters, 
$W$ is the set of receivers, 
$|V|=|W|=n$, 
and $E$ is the set of links from $V$ to $W$. 
That is, for every $(v,w)\in E$, we have $v\in V$ and $w\in W$.
For each $w\in W$, we denote by $E_w$ the set of links incoming to receiver $w$, and 
by $F_w$ the set of transmitters of those links.

Following~\cite{KowalskiMZarxiv15}, we model the interference among transmissions with an \defn{affectance matrix} 
%
$$A=\bigg[a(u,(v,w))\bigg]_{\substack{u\in V\\(v,w)\in E}} \ ,$$ 
%
where $a(u,(v,w))$ is a real number in $[0,1]$ quantifying the interference that the transmitter $u$ introduces to the communication through link $(v,w)$. 
We denote $a_{V'}((v,w))$ as the total affectance of a set of transmitters $V'\subseteq V$ on a link $(v,w)$ (i.e., the sum of affectances on $(v,w)$ over all nodes in $V'$), 
and further, $a_{V'}(E')$ as the total affectance of a set of transmitters $V'\subseteq V$ on a set of links $E'\subseteq E$ 
(i.e., the sum of affectances of $V'$ over all links in $E'$).
We do not restrict the affectance function, as long as its effect is additive; that is, 
\begin{align*}
a_{V'}((v,w)) &= \sum_{u\in V'}a(u,(v,w))\textrm{ , and }\\
a_{V'}(E') &= \sum_{(v,w)\in E'}a_{V'}((v,w)) \ .
\end{align*}

Without loss of generality we assume that time is slotted. 
Then, under the above affectance model, a \defn{successful transmission} in a time slot $t$ is defined as follows.
For any link $(v,w)\in E$, a transmission from $v$ is received at $w$ in time slot $t$ if and only if: 
\begin{itemize}
\item
$v$ transmits in time slot $t$, and 
\item
$a_{\mathcal{T} (t)}((v,w))< 1$, where $\mathcal{T} (t)\subseteq V$ is the set of nodes transmitting in time slot $t$.
\end{itemize}
The event of a non-successful transmission, that is, when the affectance is at least $1$, is called a \defn{collision}. 
We assume that a node listening to the channel cannot distinguish between a collision and background noise present in the channel in absence of transmissions; in other words,
the model is without collision detection. 

Under the model above, the \defn{\diss} problem is defined as follows: 
for each node $w\in W$, $w$ must receive a successful transmission from some node in $F_w$.

We define affectance-selective families as a purely combinatorial problem on a family of subsets of integers and a matrix of real numbers. 
(Refer to Section~\ref{sec:relwork} for a definition of classic selective families.)
The relation with \diss is the following. 
For each receiver $w\in W$, consider the set $F_w\subseteq V$ of transmitters connected to $w$. These sets of transmitters define a family $\cF$ of subsets of nodes in $V$. 
On the other hand, for a given \diss protocol, the schedule of transmissions from nodes in $V$ can also be viewed as a family $\cS$ of subsets of nodes. Specifically, for each time slot $t$, the subset of nodes in $V$ transmitting in $t$ is a member of $\cS$. 
In the \RN model, the family $\cS$ is called selective on the family $\cF$ if and only if for any $F_w\in\cF$ there is some $S_t\in \cS$ such that $|S_t\cap F_w|=1$. This is because $w$ successfully 
receives a message if and only if exactly one node in $F_w$ transmits. 
Given an integer $n>0$, 
consider a family $\cF = \{F_1,F_2,\dots,F_n\}$ of subsets of integers in $[n]$.
Let $A$ be a matrix of real numbers in $[0,1]$ defined on $\cF$
in such a way that for each $u\in[n]$ there is a corresponding row, and 
for each $w\in[n]$ and each $v\in F_w$ there is a column in $A$ corresponding to the pair $(v,w)$.
%
Then, we say that 
a family $\cS = \{S_1,S_2,\dots,S_s\}$ of subsets of $[n]$ is \defn{affectance-selective} on the family $\cF$ if
for each $w\in[n]$ there exists $j\in[s]$ such that: 
\begin{itemize}
\item
$|F_w\cap S_j|\geq 1$, and 
\item
for some $v\in (F_w\cap S_j)$ it is $\sum_{u\in S_j}a(u,(v,w))<1$.
\end{itemize}
We say that the family $\cS$ has \defn{length} $s$, and that each $w$ is \defn{affectance-selected}, or simply \defn{selected} for short.

In terms of \diss, 
labeling the transmitters as well as the receivers with consecutive integers in $[n]$, 
each $F_w\in \cF$ is the subset of transmitters connected to receiver $w$, $A$ is the affectance matrix, and each value $a(u,(v,w))$ in $A$ corresponds to the affectance of node $u$ on link $(v,w)$. Then, the family $\cS$ is a solution for \diss setting each node in set $S_t\in\cS$ to transmit in time slot $t$, for each $t\in[s]$.


\section{Our Results}
\label{sec:results}

In this work, for a given family $\cF = \{F_1,F_2,\dots,F_n\}$ of subsets of integers in $[n]$ and a given affectance matrix $A$, we first show the existence of a family $\cS$ of subsets of $[n]$ that is affectance-selective on $\cF$. Under certain conditions on the relation between $\cF$ and $A$, the family $\cS$ is proved to have a number of sets that is in $O(1+\log n\log \overline{A})$. That is, at most logarithmic on $n$ and logarithmic on the \defn{maximum average affectance} $\overline{A}$. The latter is a characterization based on $\cF$ and $A$. Specifically, $$\overline{A}=\max_{w\in[n]}\max_{F\subseteq F_w}\sum_{v\in F}\sum_{u\in [n]}a(u,(v,w))/|F|.$$

The proof of that bound is existential because it is based on the probabilistic method (as in~\cite{ClementiCMPSapprox01}). Nevertheless, it provides a method to derive algorithms for \diss. We present two \diss distributed protocols, one randomized and one deterministic. We show that both protocols have the same running time guarantee, which is asymptotically the same as the size of the affectance-selective family shown. That is, $O(1+\log n\log \overline{A})$. The randomized protocol is Monte Carlo, it is very simple (a version of Decay~\cite{decay}), and only requires knowledge of $n$, $\overline{A}$, and two constants. The deterministic protocol (inspired on~\cite{ClementiCMPSapprox01}) provides worst-case guarantees, but nodes need to know the topology and the affectance matrix $A$, and its computational complexity is exponential.

\mig{
We also include simulations to evaluate the impact of using a more accurate model of interference. We compare our randomized protocol with previous work for the \RN and SINR models. Our experimental results show a striking improvement in performance because the \RN protocol neglects interference from non-neighboring nodes, whereas SINR protocols do not take advantage of low interference from nodes that, although located at a short distance, are blocked by obstacles.
Our results also show that for the particular inputs tested our randomized protocol performs better than predicted by our theoretical analysis.
}


\section{Analysis}
\label{sec:analysis}

\subsection{Existence of an Affectance-selective Family of Polylogarithmic Size}

\begin{theorem}
\label{thm:family}
For any $n>0$,
consider a family $\cF = \{F_1,F_2,\dots,F_n\}$ of subsets of integers in $[n]$ 
and any affectance matrix $A$ defined on $\cF$.
For each $w\in[n]$, let  $\overline{A}_w=\max_{F\subseteq F_w}\sum_{v\in F}\sum_{u\in [n]}a(u,(v,w))/|F|$ be the maximum average affectance on $w$. 
If there exists a constant $c>1$ such that $\overline{A}_w \leq c|F_w|$ for all $w\in[n]$, 
then, there exists a family $\cS = \{S_1,S_2,\dots,S_s\}$ 
that is affectance-selective on $\cF$, and its size $s$ satisfies
$$s\in O\left(1+\log n\log \overline{A} \right),$$
where $\overline{A}=\max_{w\in[n]}\overline{A}_w$ is the maximum average affectance.
\end{theorem}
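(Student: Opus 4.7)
The plan is the standard probabilistic method, adapted from Clementi et al.~\cite{ClementiCMPSapprox01} to the affectance setting. Build a random candidate family $\cS$ of size $I(J+1)$ with $I=\Theta(\log n)$ repetitions and $J=\lceil \log \overline{A}\rceil$ dyadic scales: for each scale $j\in\{0,\dots,J\}$ and each repetition $i\in\{1,\dots,I\}$, form an independent random set $S_{i,j}$ by including each element of $[n]$ with probability $p_j := \min\{1,2^{-j-1}\}$. Since $|\cS|=I(J+1) = O(1+\log n\log \overline{A})$, it suffices to prove that with positive probability every $w\in[n]$ is affectance-selected by some $S_{i,j}$, and to extract such a family by the probabilistic method.

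Fix a receiver $w$ with $\overline{A}_w\geq 1$ (the case $\overline{A}_w<1$ is trivial: the single set $[n]$ already selects $w$). Pick $j_w$ so that $p_{j_w}$ is within a constant factor of $1/\overline{A}_w$, and in particular $p_{j_w}\overline{A}_w\leq 1/2$. Writing $S := S_{i,j_w}$, $Y_v := \sum_{u\in S} a(u,(v,w))$, and $T_v := \sum_{u\in[n]} a(u,(v,w)) \leq \overline{A}_w$, and adopting the natural convention $a(v,(v,w))=0$, a direct expansion gives $\mathbf{E}[Y_v\cdot\mathbf{1}[v\in S]]= p_{j_w}^2\,T_v$. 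Since $Y_v\geq 0$, Markov's inequality yields
\[
\Pr[v\in S \text{ and } Y_v<1] \;\geq\; \Pr[v\in S]-\mathbf{E}[Y_v\cdot\mathbf{1}[v\in S]] \;\geq\; p_{j_w}\bigl(1-p_{j_w}T_v\bigr) \;\geq\; p_{j_w}/2
\]
for every $v\in F_w$.

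This per-$v$ probability is only $\Theta(p_{j_w})$, so a naive union bound over $v\in F_w$ fails as soon as $|F_w|p_{j_w}=\Omega(1)$. The resolution is a second-moment argument on $M := |\{v\in F_w\cap S : Y_v<1\}|$. Summing the previous bound gives $\mathbf{E}[M]\geq |F_w|p_{j_w}/2$, while pairwise independence of the indicators $\mathbf{1}[v\in S]$ yields $\mathbf{E}[M^2]\leq |F_w|p_{j_w}+|F_w|^2 p_{j_w}^2$, so Paley-Zygmund gives
\[
\Pr[M\geq 1] \;\geq\; \frac{\mathbf{E}[M]^2}{\mathbf{E}[M^2]} \;\geq\; \frac{|F_w|p_{j_w}/4}{1+|F_w|p_{j_w}}.
\]
Here the hypothesis $\overline{A}_w\leq c|F_w|$ enters essentially: combined with $p_{j_w}=\Omega(1/\overline{A}_w)$, it forces $|F_w|p_{j_w}=\Omega(1/c)$, making the ratio a positive constant $\delta=\delta(c)$. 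Hence each repetition at scale $j_w$ selects $w$ with probability at least $\delta$; the $\Theta(\log n)$ independent repetitions at that scale drive the failure probability below $1/n^2$; and a union bound over the $n$ receivers makes $\cS$ affectance-selective with probability at least $1-O(1/n)>0$. The main obstacle is exactly this second-moment step: the hypothesis has to be deployed precisely to keep $|F_w|p_{j_w}$ bounded away from zero, so that the Paley-Zygmund ratio does not degenerate in the regime where $F_w$ is only barely larger than $\overline{A}_w/c$.
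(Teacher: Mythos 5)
Your argument is correct and reaches the claimed bound, but the key probabilistic step is genuinely different from the paper's. The paper conditions on the event that at least one $v\in F_w$ transmits and applies a Markov-type inequality to the ratio $\sum_{u}\sum_{v\in F_w}a(u,(v,w))X_uX_v\big/\sum_{v\in F_w}X_v$, obtaining $\Pr\bigl(w\text{ not selected}\mid\textstyle\sum_{v\in F_w}X_v\geq 1\bigr)\leq\overline{A}_w\,p$; it then adds the term $(1-p)^{|F_w|}$ for the event that nobody in $F_w$ transmits, and the hypothesis $\overline{A}_w\leq c|F_w|$ is used to keep $(1-p)^{|F_w|}\leq e^{-1/(2bc)}$ bounded away from $1$. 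You instead prove a per-transmitter bound $\Pr[v\in S\wedge Y_v<1]\geq p/2$ by Markov applied to $Y_v\mathbf{1}[v\in S]$, and then run the second-moment method on the count $M$ of good transmitters, with the hypothesis entering to keep $|F_w|p_{j_w}=\Omega(1/c)$ so the Paley--Zygmund ratio is a constant. These are two faces of the same phenomenon ($(1-p)^{|F_w|}$ bounded away from $1$ is equivalent to $|F_w|p$ bounded away from $0$), but your route sidesteps the paper's delicate conditional-expectation computation (where the conditioning on $\sum_{v\in F_w}X_v\geq 1$ is silently dropped when bounding the expectation by $\overline{A}_wp$) at the price of computing a second moment. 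Note also that both arguments ultimately use only the singleton case $\max_{v\in F_w}\sum_u a(u,(v,w))\leq\overline{A}_w$ of the max-average quantity.

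Two caveats. First, you invoke the convention $a(v,(v,w))=0$ to get $\mathbf{E}[Y_v\mathbf{1}[v\in S]]=p_{j_w}^2T_v$; the paper's model does not exclude self-affectance (the success condition sums over all of $\mathcal{T}(t)$, including $v$ itself), so strictly you prove the theorem under an extra, albeit physically natural, assumption --- without it the diagonal term contributes $p_{j_w}\,a(v,(v,w))$ rather than $p_{j_w}^2a(v,(v,w))$ and your lower bound degrades to $p_{j_w}\bigl(1-p_{j_w}T_v-a(v,(v,w))\bigr)$, which is useless if self-affectance is close to $1$. You should state this assumption explicitly or argue it away. Second, your scales $p_j=\min\{1,2^{-j-1}\}$ never equal $1$, so the remark that ``the single set $[n]$ selects $w$'' when $\overline{A}_w<1$ refers to a set that is not in your family; this is harmless, since scale $j=0$ (i.e., $p=1/2$) already covers that case by the very same first- and second-moment computation, but the text should say so (or $[n]$ should be appended as one additional set, which does not affect the asymptotic size).
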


\begin{proof}

We prove the claim using the probabilistic method. That is, we show a randomly generated family $\cS$ such that the probability that $\cS$ does not select some set in $\cF$ is strictly less than one. 

Let $S$ be a subset of $[n]$ defined as follows. For each $v\in [n]$, independently include $v$ in $S$ with some probability $p$ (we will discuss the best choice
for $p$ later). 
Let $X_v$ be a random variable indicating whether $v$ is in $S$ or not.
Let $Z_{w}$ be a random variable indicating whether $w\in [n]$ is selected or not.
The probability that $w$ is not selected given that some $v\in F_w$ is in $S$ is
\begin{align*}
Pr\left(Z_{w}=0\bigg|\sum_{v\in F_w}X_v \geq 1\right) 
&\leq Pr\left(\sum_{u\in [n]} \sum_{v\in F_w} a(u,(v,w)) X_u X_v
\geq \sum_{v\in F_w}X_v\right).
\end{align*}

The above inequality is true because, for $w$ not to be selected, the affectance in all pairs $(v,w)$ such that $v\in F_w$ and $X_v=1$ has to be at least one. 
The inequality is due to the right-hand side possibly including events where some pairs have affectance less than one, but others have affectance larger than one so that the overall sum is still larger than $\sum_{v\in F_w}X_v$.
This right-hand side can be bounded as follows using a Markov-type inequality that can be proved as in~\cite{MitzenmacherUpfal-book2005}. 

\begin{align*}
Pr\left(\sum_{u\in [n]} \sum_{v\in F_w} a(u,(v,w)) X_u X_v
\geq \sum_{v\in F_w}X_v\right)
&\leq E\left(\frac{\sum_{u\in [n]} \sum_{v\in F_w} a(u,(v,w)) X_u X_v}{\sum_{v\in F_w}X_v}\right).
\end{align*}

Replacing  $\overline{A}_w = \max_{F\subseteq F_w}\sum_{v\in F}\sum_{u\in [n]}a(u,(v,w))/|F| \geq \sum_{v\in F'}\sum_{u\in [n]}a(u,(v,w))/|F'|$, for any $F'\subseteq F_w$, we get the following bound.

\begin{align*}
Pr\left(Z_{w}=0\bigg|\sum_{v\in F_w}X_v \geq 1\right) 
&\leq E\left(\frac{\sum_{u\in [n]} \sum_{v\in F_w} a(u,(v,w)) X_u X_v}{\sum_{v\in F_w}X_v}\right)\\
&= E\left(\sum_{u\in [n]} \frac{\sum_{v\in F_w} a(u,(v,w)) X_v}{\sum_{z\in F_w}X_z}X_u \right)\\
&\leq \overline{A}_wp \ .
\end{align*}

Then, we have that
\begin{align}
Pr\left(Z_{w}=0\right) 
&= Pr\left(Z_{w}=0\bigg|\sum_{i\in F_w}X_i \geq 1\right)Pr\left(\sum_{i\in F_w}X_i \geq 1\right)\nonumber\\ 
&+ Pr\left(Z_{w}=0\bigg|\sum_{i\in F_w}X_i = 0\right)Pr\left(\sum_{i\in F_w}X_i =0\right)\nonumber\\
&= Pr\left(Z_{w}=0\bigg|\sum_{i\in F_w}X_i \geq 1\right)(1-(1-p)^{|F_w|}) + (1-p)^{|F_w|}\nonumber\\
&\leq \overline{A}_w p(1-(1-p)^{|F_w|}) + (1-p)^{|F_w|}\label{prob}\\
&= \overline{A}_w p + ( 1 - \overline{A}_w p)(1-p)^{|F_w|}.\nonumber
\end{align}

Consider now a family $\mathcal{S}=\{S_i\}$ of subsets of $[n]$ where $S_i$ is obtained  
including each $v\in [n]$ independently with probability $p=1/b^i$ for $i=0,1,2,\dots,\max\{\lceil\log_b(2\overline{A})\rceil,0\}$ and $b=1+1/(2c)$. 
If $\overline{A}_w\leq 1/(2b)$, replacing in Equation~\ref{prob} we have that $Pr\left(Z_{w}=0\right)\leq 1/(2b)$ for $p=1$, which is strictly smaller than $1$.
Otherwise, if $\overline{A}_w> 1/(2b)$, we know that, for some $i$, it is $1/(2b\overline{A}_w) < p \leq 1/(2\overline{A}_w)$. Replacing, 
\begin{align*}
Pr\left(Z_{w}=0\right) 
&\leq \frac{1}{2}+\left(1-\frac{1}{2b}\right)\left(1-\frac{1}{2b\overline{A}_w}\right)^{|F_w|}.
\end{align*}

Using that $\overline{A}_w \leq c|F_w|$ for some constant $c>1$, we obtain
\begin{align*}
Pr\left(Z_{w}=0\right) 
&\leq \frac{1}{2}+\left(1-\frac{1}{2b}\right)\left(1-\frac{1}{2bc|F_w|}\right)^{|F_w|}, \textrm{ using that $2bc|F_w|>1$,}\\
&\leq \frac{1}{2}+\left(1-\frac{1}{2b}\right)\left(\frac{1}{e}\right)^{1/(2bc)}.
\end{align*}

Replacing $c=1/(2(b-1))$ we get
\begin{align*}
Pr\left(Z_{w}=0\right) 
&\leq \frac{1}{2}+\left(1-\frac{1}{2b}\right)\left(\frac{1}{e}\right)^{(b-1)/b}.
\end{align*}

To show that there is a positive probability that $w$ is selected, we show that for each constant $c$ there is a constant $b=1+1/(2c)$ such that the latter is strictly smaller than $1$ as follows.
\begin{align*}
\frac{1}{2}+\left(1-\frac{1}{2b}\right)\left(\frac{1}{e}\right)^{(b-1)/b} &< 1\\
\left(1-\frac{1}{2b}\right)\left(\frac{1}{e}\right)^{(b-1)/b} &< \frac{1}{2}\\
1-\frac{1}{2b} &< \frac{1}{2}e^{(b-1)/b}\\
1-\frac{1}{2}e^{(b-1)/b} &< \frac{1}{2b}\\
2b- b e^{(b-1)/b} &< 1.
\end{align*}

The left hand side is equal to $1$ for $b=1$ and monotonically decreasing for any $b$ such that $1<b< 1.5$, which is the range of $b=1+1/(2c)$ for any $c>1$.

Having proved that that there is a positive probability that $w$ is selected, we add a multiplicity $m$ on the sets $S_i$ to show that the probability that \emph{some} $w\in[n]$ is not selected is small, as follows.

We redefine $\mathcal{S}$ as the family $\{S_{i,j}\}$ of subsets of $[n]$ where
the set $S_{i,j}$ is obtained including each $v\in [n]$ in $S_{i,j}$ independently with probability $p=1/b^i$, 
for each $i = 0,1,2, \dots, \max\{\lceil\log_b(2\overline{A})\rceil,0\}$ and each $j=1,2,\dots,m$. 

%
Then, the probability that a given $w$ is not selected is
$Pr\left(Z_{w}=0\right) \leq d^m$, where $d<1$ is some constant as shown above.
Using the union bound, the probability that \emph{some} $w\in [n]$ is not selected is
$Pr\left(\exists w\in [n] : Z_{w}=0\right) 
\leq nd^m$,
which is smaller than $1$ for some $m\in \Theta(\log n)$, 
showing the existence of an affectance-selective family $\cS$ of size $O(1+\log n\log\overline{A})$. 
\end{proof}

The bound shown matches the $O(1+\log\Delta\log |\mathcal{F}|)$ bound for the \RN model in~\cite{ClementiCMPSapprox01}, because in our setting the number of subsets to select is $|\mathcal{F}|=n$, and in the \RN model it is $\overline{A} = \Delta-1$.

\subsection{Randomized \diss Protocol}
The proof of Theorem~\ref{thm:family}, showing the existence of an affectance-selective family, yields a Monte Carlo distributed randomized protocol for \diss applicable to settings where the conditions of the theorem hold. I.e., there exists a constant $c$ bounding $ \overline{A}_w \leq c|F_w|$ for each receiver $w$. The protocol requires that all transmitters have knowledge of the maximum average affectance $\overline{A}$, the constant $c$, the number of transmitters $n$, and the constant $d<1$ computed in the proof of Theorem~\ref{thm:family}. The protocol, detailed in Algorithm~\ref{randalg}, is a version of the Decay protocol~\cite{decay} extended to the affectance model. 
Its correctness and running time are established in the following theorem.


\begin{algorithm}[htbp]
\caption{Randomized \diss protocol for each node $v\in V$. $\overline{A}=\max_{w\in W}\overline{A}_w$, is the maximum average affectance, where $\overline{A}_w=\max_{F\subseteq F_w}\sum_{v\in F}\sum_{u\in V}a(u,(v,w))/|F|$ is the maximum average affectance on $w$, $d<1$ is a constant as computed in the proof of Theorem~\ref{thm:family}, and $c>1$ is the constant bounding $ \overline{A}_w \leq c|F_w|$ for each receiver $w$.}
\label{randalg}
\DontPrintSemicolon
	$b\leftarrow 1+1/(2c)$\;
	$m \leftarrow \lceil 2 \log_{1/d} n \rceil$\;
	\For{$i=0,1,2,\dots,\max\{\lceil\log_b(2\overline{A})\rceil,0\}$}{\label{randalgloop}
		\For{$m$ times}{
			transmit with probability $1/b^i$\;
		}
	}
\end{algorithm}

\begin{theorem}
\label{thm:randalg}
Consider a layer of a \RN with affectance matrix $A$ and topology $G=(V,W,E)$, 
where $|V|=|W|=n$, where for each receiver $w\in W$ there is at least one transmitter $v\in V$ such that $(v,w)\in E$.
Then, 
if there exists a constant $c>1$ such that $\overline{A}_w \leq c|F_w|$ for all $w\in W$, 
where $\overline{A}_w=\max_{F\subseteq F_w}\sum_{v\in F}\sum_{u\in V}a(u,(v,w))/|F|$ is the maximum average affectance on $w$,
Algorithm~\ref{randalg} solves the \diss problem with high probability~\footnote{We say that an event occurs \emph{with high probability} if it occurs with probability at least $1-1/n^\kappa$, for some constant $\kappa>0$.}, and the running time is in $O(1+\log n\log \overline{A})$,
where $\overline{A}=\max_{w\in W}\overline{A}_w$ is the maximum average affectance.
\end{theorem}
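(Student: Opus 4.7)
The plan is to reduce the correctness analysis directly to the probabilistic argument already carried out in the proof of Theorem~\ref{thm:family}, since Algorithm~\ref{randalg} is nothing more than a distributed implementation of the random construction of the family $\cS$ appearing in that proof. First I would observe that each iteration $(i,j)$ of the double loop in Algorithm~\ref{randalg} corresponds to one set $S_{i,j}$ in the family $\{S_{i,j}\}$ built in the proof of Theorem~\ref{thm:family}: a node $v$ transmits in slot $(i,j)$ independently with probability $1/b^i$, so the set of transmitting nodes is distributed exactly as $S_{i,j}$. Moreover, the condition for a successful transmission at receiver $w$ in slot $(i,j)$ — i.e., that exactly one $v\in F_w$ transmits and that the total affectance at $(v,w)$ from transmitters in $\mathcal{T}(t)$ is strictly less than one — matches the selection condition in the definition of affectance-selective families. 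Thus, for each $w$, the probability it is \emph{not} selected in a given slot $(i,j)$ is exactly the quantity bounded in the proof of Theorem~\ref{thm:family}.

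Next I would make the per-receiver success bound precise. For each $w\in W$, the analysis of Theorem~\ref{thm:family} shows that there is some index $i^*$ in the range $0,1,\dots,\max\{\lceil\log_b(2\overline{A})\rceil,0\}$ at which either $p=1$ drives $\overline{A}_w$ small enough (the case $\overline{A}_w\leq 1/(2b)$) or $1/(2b\overline{A}_w) < 1/b^{i^*} \leq 1/(2\overline{A}_w)$, and in that slot the probability that $w$ is \emph{not} selected is at most some constant $d<1$ depending only on $c$. Because the $m$ inner repetitions at $i=i^*$ use independent random bits, the probability that $w$ fails in all of them is at most $d^m$. A union bound over all $n$ receivers then gives
\[
\Pr\left(\exists\, w\in W : w \text{ not selected}\right) \leq n\, d^m.
\]
Setting $m=\lceil 2\log_{1/d} n\rceil$, as in the algorithm, makes this probability at most $1/n$, establishing correctness with high probability.

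For the running time, I would simply count: the outer loop runs $\max\{\lceil\log_b(2\overline{A})\rceil,0\}+1$ times, which is in $O(1+\log \overline{A})$ since $b=1+1/(2c)$ is a constant, and the inner loop runs $m\in O(\log n)$ times. Each slot of Algorithm~\ref{randalg} performs a single (local) random experiment, so the total schedule length — and thus the running time — is $O((1+\log \overline{A})\log n)=O(1+\log n\log \overline{A})$, matching the bound on $|\cS|$ from Theorem~\ref{thm:family}.

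The main subtlety, and the only step where the reduction to Theorem~\ref{thm:family} requires care, is verifying that the distributed nature of the protocol does not alter the probabilistic guarantee: each transmitter flips its own coin, but since the coins are mutually independent the joint distribution of $\mathcal{T}(t)$ in slot $t=(i,j)$ is exactly the product distribution used to define $S_{i,j}$ in the existential proof. Once this correspondence is in place, no additional probabilistic work is needed — the hard part is already done in Theorem~\ref{thm:family}, and Algorithm~\ref{randalg} only requires knowledge of $n$, $\overline{A}$, $c$, and $d$ to implement it without any topology information.
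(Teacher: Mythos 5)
Your proposal is correct and follows essentially the same route as the paper: the paper's own proof is a two-line reduction to Theorem~\ref{thm:family} plus the observation that $m=\lceil 2\log_{1/d}n\rceil$ makes $nd^m\leq 1/n$, and your write-up simply makes the slot-to-$S_{i,j}$ correspondence, the union bound, and the iteration count explicit. One small wording slip worth fixing: under the affectance model success at $w$ does not require that \emph{exactly one} $v\in F_w$ transmits, only that at least one transmits and that for some such $v$ the total affectance on $(v,w)$ is below $1$ (which is the condition your argument actually uses, so nothing downstream is affected).
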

\begin{proof}
The first claim follows from the proof of Theorem~\ref{thm:family}, together with computing the value $m$ that makes 
$Pr\left(\exists w\in [n] : Z_{w}=0\right) 
\leq nd^m \leq 1/n$.
The running time follows from the number of iterations in Algorithm~\ref{randalg}.
\end{proof}

For settings where only $n$ and $c$ are known to the transmitters, we can run the loop in Line~\ref{randalgloop} of Algorithm~\ref{randalg} for $\lceil\log_b(2(n-1))\rceil$ times, since we know that $\overline{A}_w\leq (n-1)$ for any $w\in W$. The running time in that case would be $1+O(\log^2 n)$ steps.

\subsection{Deterministic \diss Protocol}



Algorithm~\ref{randalg} is simple and it is easily distributed because only requires knowledge of a few global parameters (namely $\overline{A}$, $c$, and $n$), and also does not require intensive computations at each node. However, the running time guarantee is only stochastic. In this section we present a deterministic algorithm that provides the same running time guarantee but worst-case, although to implement it distributedly knowledge of the graph $G$ and the affectance matrix $A$ is required.  

The ideas of algorithm $greedy_{MSF(\Delta)}$~\cite{ClementiCMPSapprox01} can be re-used here to compute a transmission schedule that solves \diss, but $greedy_{MSF(\Delta)}$ cannot be used as-is because it does not cope with affectance or families of sets with different sizes.
So, building upon the ideas of $greedy_{MSF(\Delta)}$, we present in this section an algorithm for \diss under the affectance model. 
That is, the transmission schedule is computed to cope with affectance, and without assuming anything about the number of neighbors of each receiver. 
We specify such protocol in Algorithm~\ref{detalg} and an explanation of the details follow.

\begin{algorithm}[htbp]
\caption{Deterministic \diss protocol for each node $v\in V$. $\overline{A}=\max_{w\in W}\overline{A}_w$, is the maximum average affectance, where $\overline{A}_w=\max_{F\subseteq F_w}\sum_{v\in F}\sum_{u\in V}a(u,(v,w))/|F|$ is the maximum average affectance on $w$, and $c>1$ is the constant bounding $ \overline{A}_w \leq c|F_w|$ for each $w$.}
\label{detalg}
\DontPrintSemicolon
	\tcp{Initialization}
	$p \leftarrow 0$\label{probinit}\;
	$b \leftarrow 1+1/(2c)$\;
	$m\leftarrow\max\{\lceil\log_b (2\overline{A})\rceil,0\}$\;
	$W'_0 \leftarrow \{w\in W:\overline{A}_w\leq 1/2\}$\;
	\lFor{$r=1,\dots,m$}{
		$W'_r \leftarrow \{w\in W:b^{r-1}/2<\overline{A}_w\leq b^r/2\}$
	}
	\BlankLine
	\tcp{Protocol}
	\For{each time slot {\bf while} $\exists r=0,1,\dots,m:W'_r\neq \emptyset$}{\label{outerloop}
		\If{$p\leq1/(2b\overline{A})$}{
			$p \leftarrow 1$\label{probone}\;
			$r\leftarrow 0$\;
		}
		set $V'[1\dots n]$ array of booleans\tcp*{$V'[i] \equiv i$ transmits}
		\For{$i=1,2,\dots,n$}{\label{innerloop}
			$\mathop{\mathbb{E}}_{true} \leftarrow \mathop{\mathbb{E}}_{V'[i+1\dots n]}\left(\textrm{\# selected in } W'_r \big| V'[i]=true\right)$\label{exptrue}\;
			$\mathop{\mathbb{E}}_{false} \leftarrow \mathop{\mathbb{E}}_{V'[i+1\dots n]}\left(\textrm{\# selected in } W'_r \big| V'[i]=false\right)$\label{expfalse}\;
			$V'[i]\leftarrow\mathop{\mathbb{E}}_{true} > \mathop{\mathbb{E}}_{false}$\label{endinnerloop}\;
		}
		\lIf{$V'[v]$}{transmit}
		$W'_r\leftarrow W'_r \setminus \left\{w \big| w \textrm{ was selected}\right\}$\;
		$p \leftarrow p/b$\label{probupdate}\;
		$r\leftarrow r+1$\label{roundupdate}\;
	}
\end{algorithm}

The receivers pending to be selected (initially all) are partitioned in subsets so that, for each receiver $w$, it is
\begin{displaymath}
w\in \left\{ \begin{array}{ll}
 W'_0 & \textrm{if $\overline{A}_w\leq 1/(2b)$}\\
 W'_r & \textrm{if $b^{r-1}/2<\overline{A}_w\leq b^r/2$, for $r=0,1,\dots,m$.}
  \end{array} \right.
\end{displaymath}

The expectations in Lines~\ref{exptrue} and~\ref{expfalse} of the protocol correspond to the following.
Recall that we assume the transmitters to be labeled by consecutive integers. That is, the set of transmitters is $V=\{1,2,\dots, n\}$.
Then, in Algorithm~\ref{detalg}, for each time slot $t$, we keep track of whether each node in $V$ transmits or not in an array of booleans $V'$, where index $i$ of the array is true if $i$ transmits in $t$ and false otherwise. The array is filled incrementally for $i=1,2,\dots,n$ as follows. For each index $i$, let $V_{>i}=\{i+1,\dots,n\}$ if $i<n$, or $V_{>i}=\emptyset$ otherwise. Likewise, let $V_{<i}=\{1,\dots,i-1\}$ if $i>1$, or $V_{<i}=\emptyset$ otherwise. 

Then, for each value of $r=0,1,\dots$, taking into account the action of transmitters in $V_{<i}$ that was already decided, we decide whether transmitter $i$ transmits or not in $t$ computing the expected number of receivers from a given subset that will be affectance-selected, if $i$ transmits and the actions of transmitters in $V_{>i}$ is chosen at random with probability $b^{-r}$ (Line~\ref{exptrue}). We do the same for the case that transmitter $i$ does not transmit (Line~\ref{expfalse}). The expectations are taken over the random choice of transmitters in $V_{>i}$. 
Such computation is feasible given that every transmitter $v\in V$ is assumed to know $G=(V,W,E)$ and the affectance matrix $A$. The specific computation of expectations is the following.

The calculation corresponds to the $i$th iteration of the inner loop (Line~\ref{innerloop}) and probability $p=b^{-r}$ for some $r$.
Let $X_{v,i}$ be an indicator variable defined as follows. 
The variable $X_{v,i}$ is random if $v\in V_{>i}$, and deterministic otherwise. 
For each $v\in V_{<i}$, $X_{v,i}=1$ if and only if $V'[v]=true$.
For each $v\in V_{>i}$, $X_{v,i}=1$ with probability $p$ or $X_{v,i}=0$ with probability $1-p$.
Finally, it is $X_{i,i}=1$ to compute the expectation $\mathop{\mathbb{E}}_{true}$ (Line~\ref{exptrue}) or $X_{i,i}=0$ to compute the expectation $\mathop{\mathbb{E}}_{false}$ (Line~\ref{expfalse}).
Also, let $Z_{w,i}$ be a random variable indicating whether receiver $w$ is selected or not. 

Then, it is
\begin{align*}
\mathbb{E}_{V'[i+1\dots n]}\left(\textrm{\# selected in } W'_r \big| V'[i]=true\right)
&= \sum_{w\in W'_r} Z_{w,i} Pr(Z_{w,i}=1|X_{i,i}=1) \\
\mathbb{E}_{V'[i+1\dots n]}\left(\textrm{\# selected in } W'_r \big| V'[i]=false\right)
&= \sum_{w\in W'_r} Z_{w,i} Pr(Z_{w,i}=1|X_{i,i}=0).
\end{align*}
Where,
\begin{align*}
Pr(Z_{w,i}=1) &= Pr\left(\sum_{v\in F_w}X_{v,i}\geq 1 \textrm{ and } \exists v\in F_w:\sum_{u\in V} \sum_{v\in F_w} a(u,(v,w)) X_{u,i} X_{v,i}<1\right).
\end{align*}

In the following theorem, we prove that 
each time the probability $p$ is updated to $1$ (Line~\ref{probone}), 
at least a constant fraction of receivers is selected, 
solving \diss in a logarithmic number of steps.

\begin{theorem}
\label{thm:det}
Consider a layer of a \RN with affectance matrix $A$ and topology $G=(V,W,E)$, 
where $|V|=|W|=n$, where for each receiver $w\in W$ there is at least one transmitter $v\in V$ such that $(v,w)\in E$.
Then, 
if there exists a constant $c>1$ such that $\overline{A}_w \leq c|F_w|$ for all $w\in W$, 
where $\overline{A}_w=\max_{F\subseteq F_w}\sum_{v\in F}\sum_{u\in V}a(u,(v,w))/|F|$ is the maximum average affectance on $w$,
Algorithm~\ref{detalg} solves the \diss problem, and the running time is in $O(1+\log n\log \overline{A})$,
where $\overline{A}=\max_{w\in W}\overline{A}_w$ is the maximum average affectance.
\end{theorem}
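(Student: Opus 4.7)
My plan is to derandomize the analysis of Theorem~\ref{thm:family} using the method of conditional expectations, following the template of $greedy_{MSF(\Delta)}$ from~\cite{ClementiCMPSapprox01} but adapted to affectance. The partitioning of pending receivers into buckets $W'_0, W'_1, \dots, W'_m$ by $\overline{A}_w$ is designed so that when we process bucket $W'_r$ with probability $p = b^{-r}$, every $w \in W'_r$ falls into the regime $1/(2b\overline{A}_w) < p \leq 1/(2\overline{A}_w)$ that was analyzed in the proof of Theorem~\ref{thm:family} (the $W'_0$ bucket with $p = 1$ is handled analogously, using the bound $\overline{A}_w p + (1-p)^{|F_w|} \leq 1/(2b)$ at $p=1$). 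That proof shows that each such $w$ is selected with probability at least $1-d$ for a constant $d < 1$ depending only on $c$, so by linearity the expected number of receivers of $W'_r$ that become selected in a single round of the outer loop is at least $(1-d)|W'_r|$.

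The next step is to argue that the inner loop (Lines~\ref{innerloop}--\ref{endinnerloop}) preserves this expectation deterministically. Let $Y = \sum_{w \in W'_r} Z_{w,i}$ be the random variable counting selections from $W'_r$, viewed as a function of the booleans $V'[1], \dots, V'[n]$ (where $V'[j]$ for $j > i$ is still random with probability $p$). Standard conditional-expectation reasoning gives $\mathbb{E}[Y \mid V'[1..i{-}1]] = p \cdot \mathbb{E}_{true} + (1-p)\cdot \mathbb{E}_{false}$, so by choosing $V'[i]$ to be $true$ iff $\mathbb{E}_{true} > \mathbb{E}_{false}$ we guarantee that the conditional expectation does not decrease. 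After processing all $i = 1, \dots, n$, the conditional expectation becomes the deterministic count, which is therefore at least the initial unconditional expectation $(1-d)|W'_r|$. The main technical obstacle is verifying that the probabilities $Pr(Z_{w,i} = 1 \mid X_{i,i} = \cdot)$ are actually computable from $G$ and $A$ — one must expand $Z_{w,i} = 1$ as the event that some $v \in F_w$ transmits and that the total affectance on $(v,w)$ from the transmitting set stays below $1$, and show that this joint event factors nicely enough over the independent Bernoulli choices $X_{v,i}$ for $v \in V_{>i}$ so that the expectation can be evaluated. Since the sets $F_w$ and the affectance coefficients are known, this reduces to a finite computation, albeit an exponentially expensive one, which is consistent with the announced complexity of the algorithm.

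It remains to amortize to obtain the running time. Each full cycle of the outer loop touches every bucket $W'_r$ exactly once (because $p$ is divided by $b$ and $r$ is incremented after each round, and the reset at Line~\ref{probone} is only triggered when $p$ has dropped below the useful range for $W'_m$). Therefore, after each cycle $|W'_r|$ shrinks by at least a factor $d < 1$, so after $k = \Theta(\log n)$ cycles every bucket is empty. Since $m+1 = O(\log_b \overline{A}) = O(\log \overline{A})$ rounds occur per cycle (using $b = 1 + 1/(2c) > 1$ constant), the total number of time slots is $O(1 + \log n \log \overline{A})$, matching the bound of Theorem~\ref{thm:family} and Theorem~\ref{thm:randalg}. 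Correctness is immediate: the algorithm terminates only when every bucket is empty, which is exactly the condition that every receiver has been successfully affectance-selected.
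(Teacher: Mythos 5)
Your proposal is correct and follows essentially the same route as the paper's proof: the bucket $W'_r$ is matched to the probability regime $1/(2b\overline{A}_w) < p \leq 1/(2\overline{A}_w)$ analyzed in Theorem~\ref{thm:family}, giving a constant expected fraction of selections per round, and the inner loop is derandomized by the method of conditional expectations (the paper phrases this as an induction showing $\max\{\mathbb{E}_{true},\mathbb{E}_{false}\}\geq\mathbb{E}_{X[1\dots n]}(\textrm{\#})$, which is your convex-combination argument). The final accounting --- $O(\log n)$ stages of $O(\log\overline{A})$ rounds each --- is identical.
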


\begin{proof}
Algorithm~\ref{detalg} is correct as long as it terminates, as it does not stop until $W'=\emptyset$ (Line~\ref{outerloop}). Then, to prove the claim, it is enough to prove the upper bound on the running time, which we do as follows. 


Consider the execution divided in stages, where a new stage starts each time that $p$ is set to $1$ (Line~\ref{probinit} and Line~\ref{probone}). 
Moreover, consider each stage divided in rounds according to the value of $r$.
That is, starting from round $r=0$ when $p=1$, a new round starts each time that $p$ and $r$ are updated in Lines~\ref{probupdate} and~\ref{roundupdate}.
Thus, each stage is composed by rounds $0,1,2,\dots,m$ when $p=1,b^{-1},b^{-2},\dots,b^{-m}$ respectively, and when $p$ becomes smaller or equal than $1/(2b\overline{A})$, a new stage begins and $p$ is reset to $1$ in Line~\ref{probone}.

We show now that, in any given round $r$, a constant fraction of receivers in $W'_r$ is selected. Thus, a constant fraction of receivers is selected in each stage, which yields $O(\log n)$ stages, each of $O(\log \overline{A})$ rounds, proving the claimed running time. 

Fix any given round $r$ when $p=b^{-r}$. We focus then on showing that a constant fraction of receivers in $W'_r$ is selected, knowing that, 
for each receiver $w\in W'_r$,
if $r=0$ it is 
$\overline{A}_w\leq 1/2$,
and if $r>0$ it is 
$b^{r-1}/2<\overline{A}_w\leq b^r/2$.

We showed in the proof of Theorem~\ref{thm:family} that, for any $w\in [n]$, 
if a subset $S\subseteq [n]$ is chosen including each $v\in [n]$ with a probability $b^{-i}$,
for $i$ such that $1/(2b\overline{A}_w) < b^{-i} \leq 1/(2\overline{A}_w)$,
the probability of selecting $w$ with $S$ is a positive constant $q$. 
The specific bound on $q$ is dependent on whether $\overline{A}_w\leq 1/(2b)$ or not, but still a constant for both cases. 
This bound applies to round $r$ for any receiver $w\in W'_r$ and $S$ a subset of transmitters, each chosen with probability $b^{-r}$.
Thus, the expected number of receivers selected by $S$ from $W'_r$ would be $qW'_r$, that is, a constant fraction $q$.
Let this expectation be denoted as $\mathbb{E}_{X[1\dots n]}(\textrm{\# selected in } W'_r)$, where each $X[i]$ indicates whether $i\in S$.

Then, to complete the proof, now we show that the expected number of receivers selected from $W'_r$ by the set of transmitters defined by the array $V'$ after completing the loop in Lines~\ref{innerloop}-\ref{endinnerloop} (which indeed is the actual number because no random choice is made in the last iteration) is at least $\mathbb{E}_{X[1\dots n]}(\textrm{\# selected in } W'_r)$.
Indeed, we prove the stronger claim that $\max\{\mathbb{E}_{true},\mathbb{E}_{false}\}\geq\mathbb{E}_{X[1\dots n]}(\textrm{\# selected in } W'_r)$ for each iteration of the loop, which we show by induction on the iteration index $i=1,2,\dots,n$. 
For clarity, we denote $\mathbb{E}_{\bullet}(\textrm{\# selected in } W'_r)$ as $\mathbb{E}_{\bullet}(\textrm{\#})$.
For $i=1$, we have that 
\begin{align*}
\mathbb{E}_{true} &= \mathbb{E}_{V'[2\dots n]}\left(\textrm{\#} \big| V'[1]=true\right) = \mathbb{E}_{X[2\dots n]}\left(\textrm{\#} \big| X[1]=true\right),\\
\mathbb{E}_{false} &= \mathbb{E}_{V'[2\dots n]}\left(\textrm{\#} \big| V'[1]=false\right) = \mathbb{E}_{X[2\dots n]}\left(\textrm{\#} \big| X[1]=false\right).
\end{align*}

Given that $\mathbb{E}_{X[1\dots n]}\left(\textrm{\#}\right) = p\mathbb{E}_{X[2\dots n]}\left(\textrm{\#} \big| X[1]=true\right) + (1-p) \mathbb{E}_{X[2\dots n]}\left(\textrm{\#} \big| X[1]=false\right)$, the claim is true.
Now, assuming that the claim is true for iteration $i-1$, we want to prove that $\max\{\mathbb{E}_{true},\mathbb{E}_{false}\}\geq\mathbb{E}_{X[1\dots n]}\left(\textrm{\#}\right)$ for iteration $i$, where
\begin{align*}
\mathbb{E}_{true} 
&= \mathbb{E}_{V'[i+1\dots n]}\left(\textrm{\#}\big| V'[i]=true\right) \\
\mathbb{E}_{false} 
&= \mathbb{E}_{V'[i+1\dots n]}\left(\textrm{\#}\big| V'[i]=false\right).
\end{align*}

%
%
%

By inductive hypothesis we know that
\begin{align}
\max\{ \mathbb{E}_{V'[i\dots n]}\left(\textrm{\#}\big| V'[i-1]=true\right) 
,\mathbb{E}_{V'[i\dots n]}\left(\textrm{\#}\big| V'[i-1]=false\right)\}
&\geq \mathbb{E}_{X[1\dots n]}\left(\textrm{\#}\right).\label{eqfhi}
\end{align}

Call $\mathbb{E}_{V'[i\dots n]}\left(\textrm{\#}\right)$ the expected number of receivers selected after we fix the value of $V'[i-1]$ in Line~\ref{endinnerloop}. That is,
\begin{align*}
\mathbb{E}_{V'[i\dots n]}\left(\textrm{\#}\right) 
&= \max\{ \mathbb{E}_{V'[i\dots n]}\left(\textrm{\#}\big| V'[i-1]=true\right) 
,\mathbb{E}_{V'[i\dots n]}\left(\textrm{\#}\big| V'[i-1]=false\right)\}.
\end{align*}

Replacing in Equation~\ref{eqfhi}, we have that
\begin{align}
\mathbb{E}_{V'[i\dots n]}\left(\textrm{\#}\right) 
&\geq \mathbb{E}_{X[1\dots n]}\left(\textrm{\#}\right).\label{eqhi}
\end{align}

We also have that
\begin{align}
\mathbb{E}_{V'[i\dots n]}\left(\textrm{\#}\right) 
&= p \mathbb{E}_{V'[i+1\dots n]}\left(\textrm{\#}\big| V'[i]=true\right) 
+ (1-p) \mathbb{E}_{V'[i+1\dots n]}\left(\textrm{\#}\big| V'[i]=false\right)\nonumber\\ 
&\leq \max\{ \mathbb{E}_{V'[i+1\dots n]}\left(\textrm{\#}\big| V'[i]=true\right) 
,\mathbb{E}_{V'[i+1\dots n]}\left(\textrm{\#}\big| V'[i]=false\right)\}.\label{eqmax}
\end{align}

Combining inequalities~\ref{eqmax} and~\ref{eqhi}, the claim follows.

\end{proof}


\section{Simulations}
\label{section:sim}

In this section we present our simulations, developed to evaluate the impact of a more accurate model of interference on \diss. For that purpose, we run simulations for a real-world deployment area, comparing the performance of our randomized protocol with previous protocols designed for the \RN and SINR models. The details follow.

We used as a model of a network deployment area the floor plan of the Seidenberg School of Computer Science and Information Systems at Pace University,  considering nodes installed in the intersections of each square of four ceiling panels (see Figure~\ref{fig:floorplan}). To evaluate \diss, we focused on one layer of this network going across various offices (see Figure~\ref{fig:layer}). For simplicity, to evaluate performance as $n$ grows, we replicated the same office multiple times in a layer.  

\begin{figure}
\centering
\subfloat[Seidenberg School of CSIS floor plan\label{fig:floorplan}]{
\includegraphics[width=0.47\textwidth]{./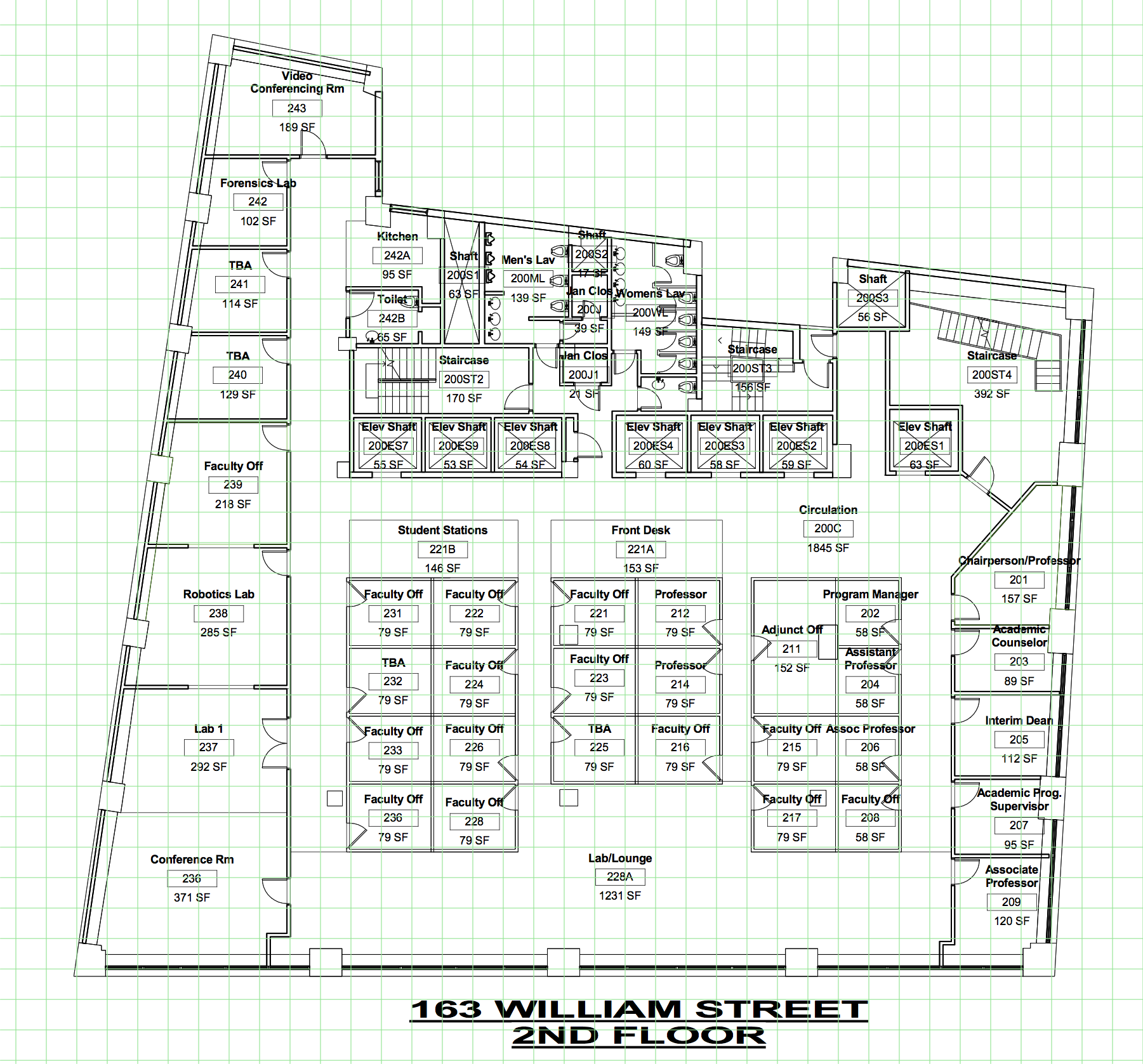}
}
~
\subfloat[A layer of the network grid\label{fig:layer}]{
\includegraphics[width=0.47\textwidth]{./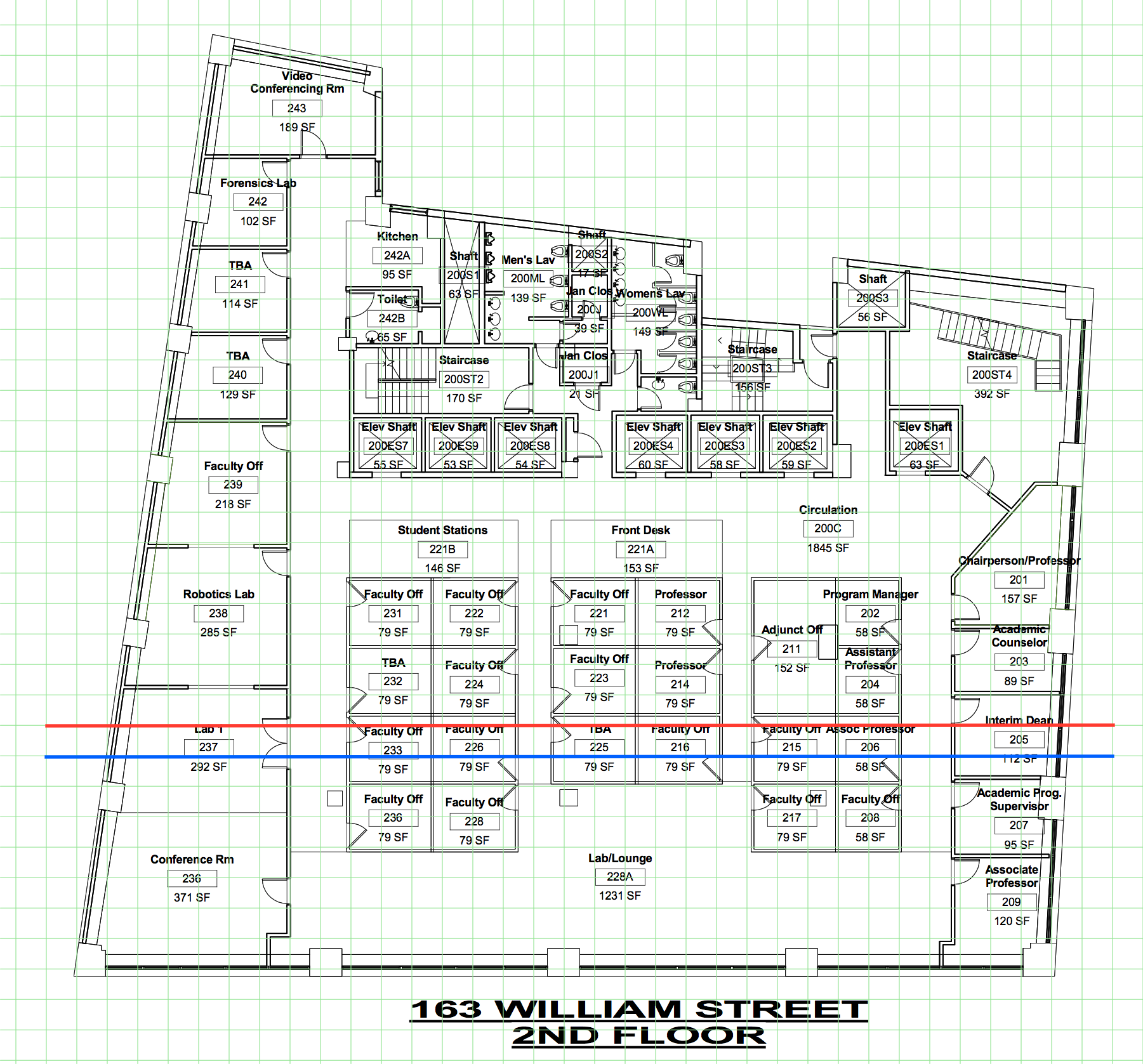}
}
\par
\vspace{0.2in}
\subfloat[The network topology\label{fig:network}]{
\includegraphics[width=0.47\textwidth]{./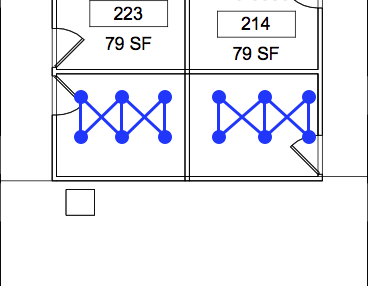}
}
\subfloat[Example of affectance\label{fig:affectance}]{
\includegraphics[width=0.47\textwidth]{./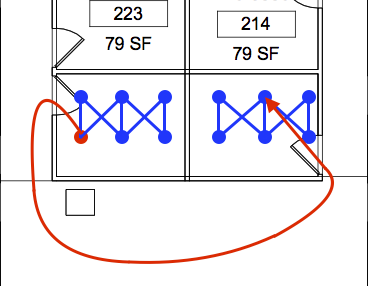}
}
\caption{Illustration of network deployment.}
\end{figure}

The walls of these offices have a metallic structure. Hence, each office behaves as a Faraday cage blocking radio transmissions (specially millimeter wave). Consequently, most of the radio waves propagate through doors (which are not metallic). We fixed the radio transmission power to be large enough to reach five grid cells, so that transmissions from layer to layer are possible. So, given the offices dimensions, transmitters within an office are connected to all receivers. On the other hand, the interference to other offices in the same layer is approximated by adding ten grid cells for each office of distance. The resulting topology can be seen in Figure~\ref{fig:network}, whereas the reason why affectance is more accurate than interference based on Euclidean distance is illustrated in Figure~\ref{fig:affectance}. For instance, it can be seen that transmitters that are close to a wall in one office have low affectance on links that are close to other side of that wall in the contiguous office, even though they are separated by only one grid-cell in Euclidean distance.

Using the network topology and the resulting affectance matrix described above as input, and for $n=6,9,12,\dots,42$, we simulated our randomized protocol in Algorithm~\ref{randalg}, which requires knowledge of only global variables $n$, $c$, and $\overline{A}$. (Refer to Algorithm~\ref{randalg} for further details.) 
For comparison, we also simulated protocols designed for the \RN and SINR models of interference on the same inputs, but considering a transmission successful under the affectance model constraints, as defined in Section~\ref{sec:model}.
We did not simulate our deterministic protocol in Algorithm~\ref{detalg} because the schedule computation has exponential complexity.

For the \RN model of interference, we simulated the classic Decay~\cite{decay} protocol, whereas for SINR we simulated the Broadcast protocol in Algorithm 1 in~\cite{JurdzinskiKRS13}. 
(Most of the work for SINR is oriented to link scheduling, which cannot be accurately mapped to \diss or Broadcast.) 
The former requires knowledge of global variable $\Delta$, which is the maximum in-degree in the network, whereas the latter requires knowledge of global variables $density$ and $dilution$, as defined in~\cite{JurdzinskiKRS13}. 
All three protocols provide guarantees on the number of rounds of communication needed to complete Broadcast, but running them for that fixed time would not provide any performance comparison. Instead, for each of the protocols we measured the number of rounds of communication passed until all receivers have received the message. 
In Algorithms~\ref{algRN} and~\ref{algSINR} we specify how we adapted the \RN and SINR protocols respectively for our simulations.
The results of the simulations are plotted in Figure~\ref{plot} and analyzed in the following section.

\begin{algorithm}
\caption{Decay protocol~\cite{decay} for each transmitter $v\in V$. $\Delta$ is the maximum in-degree of the network.}
\label{algRN}
\DontPrintSemicolon
	$rounds \leftarrow 0$\;
	$counter \leftarrow 0$\;
	\While{$\exists w\in W : w$ did not receive}{
		$rounds ++$\;
		\lIf{$counter = 0$}{$transmit \leftarrow true$}
		\If{$transmit = true$}{
			$v$ transmits the message\;
			with probability $1/2$ set $transmit \leftarrow false$\;
		}
		$counter ++$\;
		\lIf{$counter=2\lceil \log \Delta\rceil$}{$counter \leftarrow 0$}
	}
	\Return{$rounds$}\;
\end{algorithm}

\begin{algorithm}
\caption{Algorithm 1 in~\cite{JurdzinskiKRS13} for each transmitter $v\in V$. $density$ and $dilution$ are parameters of the network as defined in~\cite{JurdzinskiKRS13}.}
\label{algSINR}
\DontPrintSemicolon
	$rounds \leftarrow 0$\;
	\While{$\exists w\in W : w$ did not receive}{
		$rounds ++$\;
		\If{$rounds \equiv v \mod dilution$}{
			with probability $1/density$, $v$ transmits the message\;
		}
	}
	\Return{$rounds$}\;
\end{algorithm}

\section{Conclusions}

\begin{figure}
\centering
	\includegraphics[width=\textwidth]{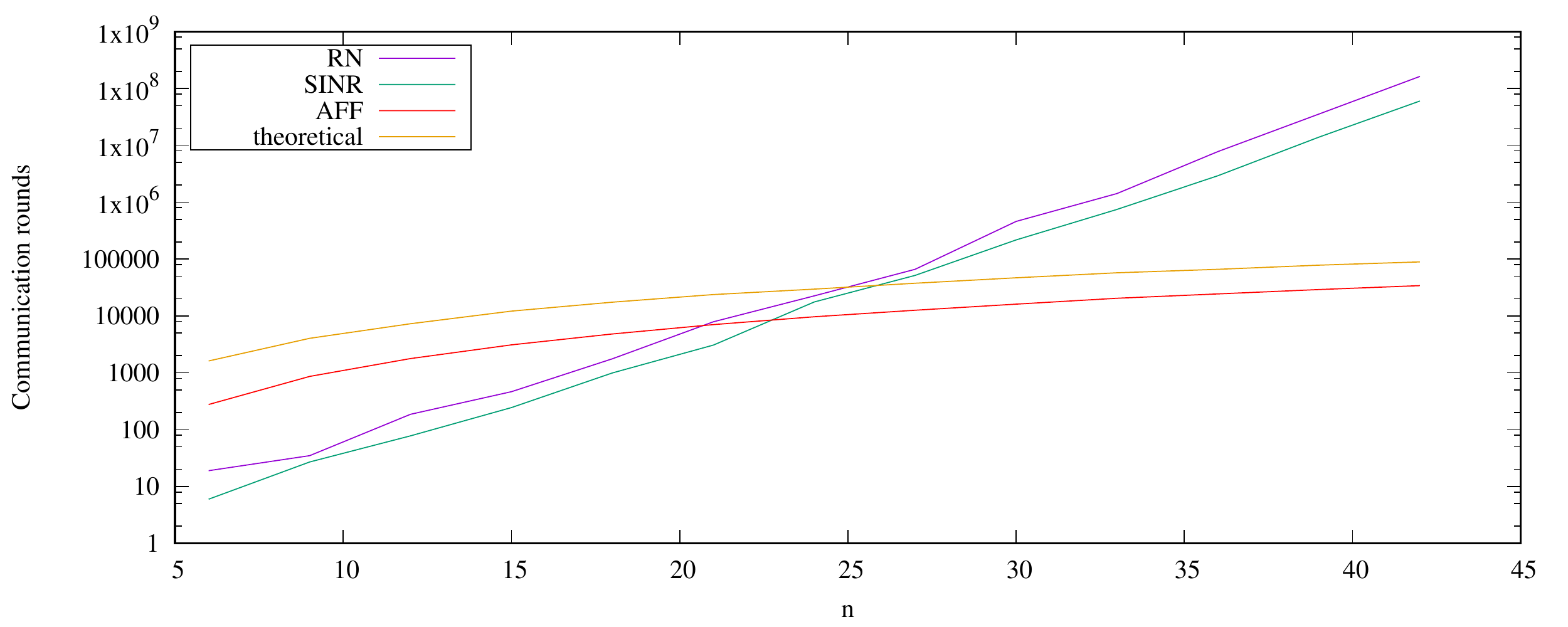}
\caption{Simulation results.}
\label{plot}
\end{figure}

As seen in the plot of Figure~\ref{plot}, our experimental results show a striking improvement in performance of our protocol with respect to Algorithms~\ref{algRN} and~\ref{algSINR}. Indeed, the running times of Algorithms~\ref{algRN} and~\ref{algSINR} grow exponentially with $n$ (the scale of the y axis is logarithmic), whereas our algorithm's running time grows exponentiallly slower. Moreover, the plot shows also the theoretical upper bound proved in Theorem~\ref{thm:randalg}. It can be seen that in these simulations our protocol performs better than the theoretical guarantees. 
This difference in performance could be due to an algorithmic improvement. However, at their core, all three algorithms are based on iteratively choosing to transmit with some probability. Thus, we conclude that the improvement is due to a careful choice of such transmission probability, making it a function of the network characteristic derived from the interference measured experimentally, rather than due to algorithmic novelty. This conclusion should not come as a surprise, given that Algorithm~\ref{algRN} was designed neglecting interference from non-neighboring nodes, whereas Algorithm~\ref{algSINR} does not take advantage of low interference from nodes that, although located at a short distance, are blocked by obstacles. 
Therefore, the results of our experimental evaluation show the importance of studying information dissemination under more accurate models of interference.




\bibliographystyle{plain}
\bibliography{references,references2}



\end{document}